\documentclass{amsart}

\theoremstyle{definition}

\theoremstyle{remark}

\numberwithin{equation}{section}

\usepackage{amsmath,amssymb,amsrefs}
\usepackage{mathtools}
\usepackage{appendix}
\usepackage{pgfplots}
\usepackage[vlined,linesnumbered]{algorithm2e}

\pgfplotsset{compat=1.17}
\newtheorem{prop}{Proposition}

\newcommand{\HIDE}[1]{}

\newcommand{\ckh}[1]{#1}

\newcommand{\Det}[1]{\mathord{\mathord{\mathbf{det}}\left( #1 \right)}}
\newcommand{\phat}[1]{\mathord{\acute{#1}}}

\begin{document}

\title{A General Method for Generating Discrete Orthogonal Matrices}

\author{Ka-Hou~Chan*}
\address{School of Applied Sciences, Macao Polytechnic Institute, Macao, China}
\email{chankahou@ipm.edu.mo}
\thanks{*Corresponding Author}

\author{Wei~Ke}
\address{School of Applied Sciences, Macao Polytechnic Institute, Macao, China}
\email{wke@ipm.edu.mo}

\author{Sio-Kei~Im}
\address{Macao Polytechnic Institute, Macao, China}

\keywords{
	Discrete Orthogonal Matrices\and
	Discrete Cosine Transform\and
	Discrete Tchebichef Transform\and
	Orthogonal Polynomials\and
	Invertible Transformers
}

\date{\today}

\begin{abstract}
Discrete orthogonal matrices have several applications in information technology, such as in coding and cryptography.
It is often challenging to generate discrete orthogonal matrices.
A common approach widely in use is to discretize continuous orthogonal functions that have been discovered.
The need of certain continuous functions is restrictive.
To simplify the process while improving the efficiency and flexibility,
we present a general method for generating orthogonal matrices directly through the construction of certain even and odd polynomials
from a set of distinct positive values,
bypassing the need of continuous orthogonal functions.
We provide a constructive proof by induction that not only asserts the existence of such polynomials,
but also tells how to iteratively construct them.
Besides the derivation of the method as simple as a few nested loops,
we discuss two well-known discrete transforms, the Discrete Cosine Transform and the Discrete Tchebichef Transform.
How they can be achieved using our method with the specific values,
and show how to embed them into the transform module of video coding.
By the same token, we also show some examples of how to generate new orthogonal matrices from arbitrarily chosen values.
\end{abstract}

\maketitle

\section{Introduction}
\label{sec:intro}

Orthogonal transformations have very useful properties in solving science and engineering problems.
Just like the Fourier and Chebyshev series which are effective methods to project a periodic function into a series of linearly independent terms,
orthogonal polynomials provide a natural way to solve the related problems,
such as compression and protection in image processing~\cite{al2007combined,arman1993image,garg2019hybrid},
communication~\cite{wu2020self},
pattern recognition~\cite{monro2007dct,harakannanavar2019face},
feature capturing~\cite{dale2007dct,dhavale2012dwt} and
approximation theory\cite{arvesu2020multiple}.
Among various types of transformers, matrix transformers are most widely used
due essentially to their simplicity and explicitness,
especially for the transformations on real intervals ($\mathbb{R} \rightarrow \mathbb{R}$).
Even more important, orthogonal matrices are of a special type of transformers, for they are always invertible.
As a result, the source information can be recovered from the data that are transformed by an orthogonal matrix.

In image compression, most of the above mentioned applications and techniques deal with a lot of bulky source data,
such as video, audio and images, which often have real-time requirement.
Hence, data compression plays a major role in the storage and transmission.
Techniques such as the Discrete Cosine Transform~(DCT)~\cite{ahmed1974discrete}
is typically used in video encoding for transformations from the spatial domain to the frequency domain~\cite{song2000pvh},
followed by coding methods such as Huffman coding.
In recent years, the Discrete Tchebichef Transform~(DTT) provides another transformation method using
the Chebyshev moments~\cite{mukundan2003improving,mukundan2001image},
which has as good energy compression properties as the DCT and works better for
a certain class of images~\cite{hunt2004comparison}.
Both of the above example transformations are defined upon orthogonal polynomials.
The orthogonality is established over a continuous domain and approximated discretely over a certain amount of sample points.
Discrete orthogonal transformations have witnessed the interplay of signal processing,
semiconductor circuits, wireless networks and embedded systems to provide viable and cutting-edge technologies
that are truly the state-of-the-art.
The challenge lies in delivering practically realizable and economic solutions, while retaining the quality.

It is well-known that the orthogonality of two polynomials $P_i(x)$ and $P'_j(x)$, respectively having degrees $i$ and $j$,
is defined by extending the dot product of two vectors,
the sum of the products of the corresponding components,
to the integral of product $P_i(x) P'_j(x)$ over a continuous domain.
Formally, when the integral becomes zero, the two polynomials are orthogonal to each other, i.e.,
\begin{equation*}
	\int P_i(x) P'_j(x) dx = 0.
\end{equation*}
In practical applications, this definition is often approximated over a set of discrete samples $x_0, \dots, x_{n-1}$,
\begin{equation*}
	\sum\limits_{k=0}^{n-1} \left[ P_i(x_k) P'_j(x_k) \right] = 0.
\end{equation*}
Those satisfying this property are called discrete orthogonal polynomials~\cite{abramowitz1972handbook}.
Therefore, together with the degrees of polynomials ranging from $0$ to $n-1$, an $n \times n$ discrete orthogonal matrix
$\begin{bmatrix}
	P_i(x_k)
\end{bmatrix}$
can be constructed,
where any two different row vectors are orthogonal.
Discrete orthogonal matrices are commonly used in a number of orthogonal transformations over real intervals, such as
the Chebyshev Polynomials~\cite{weisstein2003chebyshev,mason2002chebyshev},
the Legendre Polynomials~\cite{weisstein2002legendre,sofonea2020inequality},
the Discrete Hartley Transform~\cite{bracewell1983discrete} and
the well-known Discrete Cosine Transform~\cite{ahmed1974discrete}.
A comprehensive overview of these orthogonal polynomials, along with the development of their discrete matrices, is also detailed in~\cite{roy2011}.

The purpose of this paper is to derive discrete orthogonal matrices directly by solving systems of linear equations,
rather than to discretize existing continuous orthogonal polynomials.
Our method has several advantages.
It has virtually no precondition to use.
Orthogonal matrices of arbitrary sizes can be generated to the need of an application.
It directly follows the definition of discrete orthogonality,
eliminating the need to discuss the orthogonal property over a continuous domain,
such as the interval $[-1, +1]$ of the Chebyshev Polynomials~\cite{mason2002chebyshev}.
The method focuses on how to derive the coefficients of the polynomials
that must be discretely orthogonal to each other over a set of given sample values,
for example,
$x_k = \cos\frac{\pi(2k+1)}{2n}$,
for $k=0,\dots,n-1$, are the values of the $n \times n$ orthogonal matrix for the Discrete Cosine Transform~(DCT)~\cite{ahmed1974discrete}.
The errors in the discretization of continuous functions can also be avoided.

Generating orthogonal matrices directly from a set of values give engineers a new way of obtaining such matrices with unlimited variations,
without the need to discover and prove the properties of orthogonal polynomials mathematically in the first place.
Although, by jumping to the construction directly, we sacrifice some mathematical insights and certainties,
we provide a way to significantly broaden the base of discrete orthogonal matrices for engineering analyses.
Our method is also simple and intuitive.
It starts with the definition of discrete orthogonality,
makes use of even and odd functions, inspired by the DCT and DTT, to simplify the problems,
constructs the linear equation system for deriving the coefficients of the polynomials,
proves that a unique solution exists, and finally
inductively obtains the solution.
Through the practicing of this method, we easily and effectively reproduce the orthogonal matrices for DCT and DTT in only a few simple steps.
We also generate a couple of others to show the potential and flexibility.

The rest of the paper is organized as follows.
Section~\ref{sec:related} goes through the related work.
Section~\ref{sec:orthogonal} presents the technical details and justifications of the orthogonal matrix generation method.
Section~\ref{sec:practice} reproduces a few well-known orthogonal matrices to show the effectiveness of the method.
Finally, Section~\ref{sec:conclude} concludes the paper.

\section{Related Work}
\label{sec:related}

In the past three decades, many researchers aimed to generalize the theory of
how to construct orthogonal polynomials of a single discrete variable,
as the solution of hypergeometric type differential equations, to that of multiple variables.
In early days, a method was designed in~\cite{markett1994linearization}
that began with the three-term recurrence relation for symmetric orthogonal polynomial systems
to set up a partial differential equation for the orthogonal polynomials,
in case of the connection problem, or
for the product of two orthogonal polynomials,
in case of the linearization problem.
This equation had to be solved in terms of the initial data to expand the coefficients.
In~\cite{van1999properties},
to make the relevant orthogonality measures continuous,
the parameter domain was carefully chosen.
This method focuses on a different way to obtain parameters,
where the orthogonality measure becomes merely discrete that it is finitely supported on the grid points with given weights.
Other discrete multi-variable extensions of hypergeometric orthogonal polynomials were considered in~\cite{xu2004discrete}.
Later, a novel set of discrete and continuous orthogonal matrices based on orthogonal polynomials was introduced into
the field of orthogonal polynomial generation~\cite{xu2005second}.
In~\cite{rodal2005orthogonal}, several relations linking the differences
between bivariate discrete orthogonal polynomials and general polynomials were given.
They presented a multi-variable generalization for all the discrete families,
that gave each family a hypergeometric representation and a orthogonality weight function,
proving that these polynomials were orthogonal with respect to the subspace of lower degrees
and biorthogonal within a given subspace~\cite{koekoek2010hypergeometric}.
Next, a systematic study of the orthogonal polynomial solutions to a second order partial differential equation
with two variables of hypergeometric type was made in~\cite{rodal2008structure}.
In the bivariate discrete case, a hypergeometric formula was also given in~\cite{rodal2007linear}.
For an infinitely differentiable function, the formula for the expansion coefficients of a general order derivative was available for the expansions in Chebyshev polynomials~\cite{doha2006recurrence}.
Thus the generation of recurrence relations to expand the coefficients of multi-variable orthogonal polynomials is similar to
that in the single, both continuous and discrete, variable case~\cite{ahmed2009recurrence}.
These results motivated those researchers interested in multidimensional mathematical physics problems
to use expansions in terms of orthogonal polynomials of multiple discrete variables.

Meanwhile, there were attempts in order to expand the coefficients of an arbitrary polynomial of a discrete variable
and evaluate the expanded coefficients of an orthogonal matrix.
Few advantages were achieved in these problems until
\cite{godoy1997minimal} presented the recent recursive approach,
and \cite{zarzo1997results} gave an alternative way to the approaches for producing classical orthogonal polynomials.
\cite{ronveaux1995recurrence} used recurrent equations to prove the positivity of the connection coefficients between certain instances of orthogonal polynomials.
They designed a constructive algorithm which allowed us to calculate recurrently the expansion coefficients of the evaluation problem.
However, this approach requires the knowledge of the differential equation of the polynomial to expand and
the recursion relation as well as the differential-difference relation must be prepared for the polynomials conforming the orthogonal set.
A few years later, the approach in~\cite{alvarez2000linearization}
presented a very similar algorithm for finding the recurrence relation for both the connection and linearization coefficients.
Also, another algorithm was developed for solving the connection problem between the four families of classical orthogonal polynomials~\cite{wozny2003recurrence}.
Recently, in~\cite{cohl2013generalization} there provides a series rearrangement technique
combining a connection relation with a generating function, resulting in a series with multiple sums.
Then, \cite{cohl2013generalizations} extended this technique to many generating functions to derive a generalized generating function whose coefficients were given in hypergeometric functions.
To the best of our knowledge, the coefficients of polynomials are always related to the polynomials of lower degrees
when they are in a series of orthogonal polynomials of the same type.
It leads to that the coefficients of a higher degree polynomial can be determined by some recursion or iteration relation of the corresponding linearization.
The order of summations are then rearranged and it is often simplified to the production of a generating function
whose coefficients are given in terms of the general or fundamental hypergeometric functions.

\section{Discrete Orthogonal Polynomials and Matrices}
\label{sec:orthogonal}

An $n\times n$ matrix $M$ is an orthogonal matrix if the transpose $M^T$ equals to the inverse $M^{-1}$.
Thus, an orthogonal matrix is always invertible.
By the definition $M M^T = I$, where $I$ is the identity matrix,
the rows of an orthogonal matrix form an orthonormal basis that
each row vector has length one, and is perpendicular to each other rows.
Formally speaking, the dot product of two row vectors $\vec{a}_i \cdot \vec{a}_j$ is $1$ when $i = j$, or $0$ otherwise, that is,
\begin{equation}\label{eq:condition}
	\sum\limits_{k=0}^{n-1} (a_{ik} \times a_{jk}) =
	\begin{cases}
		0, & i \neq j \\
		1, & i = j
	\end{cases}
\end{equation}
for $0 \leq i, j \leq n-1$.

We consider the construction of a type of orthogonal matrices from a set of values $x_0, x_1, \dots, x_{n-1}$, and
a set of polynomials $P_0(x), \dots, P_{n-1}(x)$ respectively of degrees from $0$ to $n-1$.
We denote the coefficients of the polynomial expansions by $c_{(i,j)}$, such that
$$
P_i(x) = \sum_{k=0}^{i} c_{(i,k)} x^{i-k},
$$
for $0 \leq i \leq n-1$. We then construct the orthogonal matrix of the form
\begin{equation}
	\label{eq:orthogonal}
	M =
	\begin{bmatrix}
		P_i(x_k)
	\end{bmatrix}
	_{0 \leq i, k \leq n-1},
\end{equation}
by deriving the polynomials $P_0(x), \dots, P_{n-1}(x)$~\cite{day2005roots}.
These polynomials are called the orthonormal basis of the orthogonal matrix $M$~\cite{abramowitz1972handbook,chihara2011introduction,nikiforov1991classical}.

Together with the condition of orthogonal matrices in~(\ref{eq:condition}), we require
\begin{equation}
	\label{eq:poly-cond}
	\sum_{k=0}^{n-1} \left[ P_i(x_k) P_j(x_k) \right] =
	\begin{cases}
		0, & i \neq j \\
		1, & i = j
	\end{cases}
\end{equation}
for $0 \leq i, j \leq n-1$. An easy way to make a summation zero is to set half of the items the opposite values of the other half, for example, when $n = 2m$,
we should have
\begin{equation*}
	P_i(x_k) P_j(x_k) = - P_i(x_{k+m}) P_j(x_{k+m}),
\end{equation*}
for $0 \leq i, j \leq 2m-1$ and $0 \leq k \leq m-1$. We can further refine this condition to
\begin{equation}
	\label{eq:cancel}
	P_i(x_k) = -P_i(x_{k+m}) \quad \text{and} \quad P_j(x_k) = P_j(x_{k+m}).
\end{equation}
It's clear that when $x_k = -x_{k+m}$, the condition in~(\ref{eq:cancel}) can be fulfilled if $P_i(x)$ is an odd function and $P_j(x)$ an even function.

Based on the analysis, we narrow the range of the polynomials down to only even and odd functions,
together with a set of opposite values to make use of the parity as above.
Given $m$ distinct values $y_0, \dots, y_{m-1} > 0$, we choose $\pm y_0, \dots, \pm y_{m-1}$ as the set of values for the matrix construction.
Thus, the matrix in~(\ref{eq:orthogonal}) is formulated as
\begin{equation}\label{eq:sym-orthogonal}
	\begin{bmatrix}
		P_i(-y_0) & \cdots & P_i(-y_{m-1}) & P_i(+y_0) & \cdots & P_i(+y_{m-1})
	\end{bmatrix},
\end{equation}
for $0 \leq i \leq 2m-1$. We are going to derive the orthogonal matrix in (\ref{eq:sym-orthogonal}) by resolving the coefficients of polynomials $P_0, \dots, P_{2m-1}$
based on the set of values $\pm y_0, \dots, \pm y_{m-1}$.

\subsection{Even and Odd Polynomials}

Consider the expansion of an $i$-degree polynomial.
When $i = 2t$, an even polynomial can be constructed by removing all the odd-degree terms.
Thus, the expansion of such an even polynomial can be written as
\begin{equation}
	\label{eq:even-poly}
	P_{2t}(x) = \sum_{p=0}^{t} \left[ c_{(2t,2p)} x^{2(t-p)} \right].
\end{equation}
Similarly, when $i = 2t+1$, an odd polynomial can be obtained by multiplying an $x$ to every and each term in~(\ref{eq:even-poly}), where all the even-degree terms are removed,
\ckh{
	\begin{equation}
		\label{eq:odd-poly}
		P_{2t+1}(x) = \sum_{p=0}^{t} \left[ c_{(2t+1,2p+1)} x^{2(t-p)+1} \right].
	\end{equation}
}
For the parity properties of even and odd polynomials, we have
\begin{equation}\label{eq:parity}
	P_{2t}(-x) = P_{2t}(x) \quad \text{and} \quad P_{2t+1}(-x) = -P_{2t+1}(x).
\end{equation}
Now, we limit the choice of the $P_i$ polynomials to those of the forms in~(\ref{eq:even-poly}) and~(\ref{eq:odd-poly}).
The number of the unknown coefficients is reduced to $t+1$ for each of the $(2t)$- and $(2t+1)$-degree polynomials.
We are going to derive these unknown coefficients based on the condition of orthogonal matrices in~(\ref{eq:poly-cond}).
We substitute the rows of the matrix in~(\ref{eq:sym-orthogonal}) for the rows $P_i$ and $P_j$ in condition~(\ref{eq:poly-cond}),
\begin{equation*}
	\sum_{k=0}^{2m-1} \left[ P_i(x_k) P_j(x_k) \right]
	= \sum_{k=0}^{m-1} \left[ P_i(-y_k) P_j(-y_k) + P_i(+y_k) P_j(+y_k) \right]
	=
	\begin{cases}
		0, & i \neq j \\
		1, & i = j
	\end{cases} \\
\end{equation*}
for $0 \leq i,j \leq 2m-1$, and consider the parity property in (\ref{eq:parity}), we have
\begin{equation}\label{eq:condition2}
	\sum_{k=0}^{2m-1} \left[ P_i(x_k) P_j(x_k) \right]
	=\begin{cases}
		0, & i \not\equiv j \pmod{2} \\
		2 \times \sum_{k=0}^{m-1} \left[ P_i(y_k) P_j(y_k) \right], & i \equiv j \pmod{2}.
	\end{cases}
\end{equation}
To derive the coefficients for the orthogonal matrix, we focus on the case of $i \equiv j \pmod{2}$,
where the sum is required to be $1$ when $i = j$, or $0$ otherwise.

\subsection{Polynomial Coefficient Induction}
\label{ssec:coef-induct}

The dot product of a row in an orthogonal matrix with itself is $1$, or $0$ with another row.
An even polynomial $P_{2t}(x)$ in (\ref{eq:even-poly}) has only $t+1$ coefficients to resolve.
\ckh{
	If we take the highest coefficient (with $p=0$) out and resolve it later by the unit length condition,
	there are only $t$ coefficients left,
	\begin{equation*}
		d_{(2t,2p)} = \frac{c_{(2t,2p)}}{c_{(2t,0)}} \quad\text{and}\quad d_{(2t+1,2p+1)} = \frac{c_{(2t+1,2p+1)}}{c_{(2t+1,1)}},
	\end{equation*}
	for $1 \leq p \leq t$, and we have $d_{(2t,0)} = d_{(2t+1,1)} = 1$.
	We denote this form of $P(x)$ as $\hat{P}(x)$, i.e., $\hat{P}_{2t}(x) = \frac{1}{c_{(2t,0)}} P_{2i}(x)$ and $\hat{P}_{2t+1}(x) = \frac{1}{c_{(2t+1,1)}} P_{2i+1}(x)$ respectively.
}

Obviously, we can safely replace those $P(x)$ with $\hat{P}(x)$
in the discussion of obtaining the perpendicularity between two rows in the matrix, since there is only a scalar difference.
There are exactly $t$ even polynomials, $\hat{P}_0(x), \hat{P}_2(x), \dots, \hat{P}_{2(t-1)}(x)$ with smaller degrees in the matrix.
By the condition that the $t$ rows constructed by these smaller polynomials are perpendicular to the row from polynomial $\hat{P}_{2t}(x)$,
it establishes a system of $t$ equations.
If there are solutions to the equation system,
and we can find a general way to solve the coefficients $d_{(2t,2p)}$, for $1 \leq p \leq t$, from these equations,
then we are able to obtain the coefficients of all the polynomials from $\hat{P}_0(x)$ to $\hat{P}_{2t}(x)$ inductively.
The base case is trivial, that is, $\hat{P}_0(x) = 1$.

For such a matrix of size $2m \times 2m$, the equation system for the coefficients of $\hat{P}_{2t}(x)$ is straightforward,
by letting the dot products with those smaller even polynomials be $0$,
\begin{equation*}
	\sum_{k=0}^{m-1} \left[ \hat{P}_{2i}(y_k) \hat{P}_{2t}(y_k) \right]
	= \sum_{k=0}^{m-1} \left[ \hat{P}_{2i}(y_k) \left(y_k^{2t}+\sum_{p=1}^t \left( d_{(2t,2p)} y_k^{2(t-p)} \right)\right) \right] = 0,
\end{equation*}
for $0 \leq i \leq t-1$.
Then, we examine the terms containing a certain coefficient $d_{(2t,2p)}$, for $1 \leq p \leq t$. The above equation system can be written as
\begin{equation*}
	\sum_{k=0}^{m-1} \left[ \hat{P}_{2i}(y_k) y_k^{2t} \right] +
	\sum_{p=1}^t
	\sum_{k=0}^{m-1} \left[ d_{(2t,2p)} \left( \hat{P}_{2i}(y_k) y_k^{2(t-p)} \right) \right] = 0,
\end{equation*}
for $0 \leq i \leq t-1$. Thus, we have a linear equation system for the unknown coefficients as
\begin{equation}
	\label{eq:low-coef}
	A_t D_t = -B_t,
\end{equation}
where
\begin{equation*}
	%
	\begin{aligned}
		A_t =& \mathop{\begin{bmatrix}
				\sum\limits_{k=0}^{m-1} \left( \hat{P}_{2i}(y_k) y_k^{2(t-p)} \right)
		\end{bmatrix}}
		\limits_{0 \leq i \leq t-1,1 \leq p \leq t},\\
		D_t =& \mathop{\begin{bmatrix}
				\vphantom{\sum\limits_{k=0}^{m-1}}
				d_{(2t,2p)}
		\end{bmatrix}}
		\limits_{1 \leq p \leq t},\\
		B_t =& \mathop{\begin{bmatrix}
				\sum\limits_{k=0}^{m-1} \left( \hat{P}_{2i}(y_k) y_k^{2t} \right)
		\end{bmatrix}}
		\limits_{0 \leq i \leq t-1}.
	\end{aligned}
\end{equation*}
We induct on $t$ to prove that the determinant $\Det{A_t} \neq 0$, thus (\ref{eq:low-coef}) has a unique solution to $D_t$.
\begin{prop}
	\label{prop:det-nz}
	For $1 \leq t \leq m-1$, $\Det{A_t} \neq 0$.
\end{prop}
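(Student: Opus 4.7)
The plan is to argue by induction on $t$, using contradiction at the inductive step and exploiting the discrete orthogonality that the induction hypothesis guarantees among the polynomials of lower degree. The base case $t=1$ is immediate, since $A_1$ has the single entry $\sum_{k=0}^{m-1}\hat P_0(y_k)=m\neq 0$.

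For the inductive step I would fix $t\geq 2$ and assume $\Det{A_s}\neq 0$ for every $1\leq s\leq t-1$, so that by (\ref{eq:low-coef}) the polynomials $\hat P_0,\hat P_2,\dots,\hat P_{2(t-1)}$ exist uniquely and are pairwise orthogonal with respect to the discrete bilinear form $\langle f,g\rangle:=\sum_{k=0}^{m-1}f(y_k)g(y_k)$. Assuming toward contradiction that $\Det{A_t}=0$, there exist scalars $\alpha_1,\dots,\alpha_t$, not all zero, with $\sum_{p=1}^{t}\alpha_p\,(\text{column }p\text{ of }A_t)=0$. Packaging these scalars as the even polynomial $Q(y):=\sum_{p=1}^{t}\alpha_p\,y^{2(t-p)}$, this dependence translates exactly into the $t$ identities $\langle \hat P_{2i},Q\rangle=0$ for $0\leq i\leq t-1$.

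The core step is then to show $Q$ must vanish at every $y_k$. Because $Q$ is even of degree at most $2(t-1)$ and the $\hat P_{2i}$, $0\leq i\leq t-1$, are $t$ even polynomials of strictly increasing even degrees $0,2,\dots,2(t-1)$, they are linearly independent and hence a basis of that space; expanding $Q=\sum_{i=0}^{t-1}\beta_i\hat P_{2i}$ and computing $\langle Q,Q\rangle=\sum_{i=0}^{t-1}\beta_i\langle \hat P_{2i},Q\rangle=0$ forces $Q(y_k)=0$ for every $k$. Finally, substituting $u=y^2$ and setting $R(u):=\sum_{p=1}^{t}\alpha_p\,u^{t-p}$ gives a polynomial of degree at most $t-1\leq m-2$ with $m$ distinct positive roots $y_0^{2},\dots,y_{m-1}^{2}$; this compels $R\equiv 0$, hence $\alpha_p=0$ for all $p$, contradicting the choice of $\alpha$.

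The hard part will be justifying cleanly that the inductive hypothesis really delivers $\hat P_0,\dots,\hat P_{2(t-1)}$ as a \emph{basis} of the even polynomials of degree at most $2(t-1)$, so that expanding $Q$ in this basis is legitimate and the chain $\langle Q,Q\rangle=0\ \Rightarrow\ Q(y_k)=0$ for all $k$ is valid. Once that structural observation is in place, the parity substitution $u=y^2$ collapses the final counting-of-roots step to a one-line polynomial-degree bound, and the proof closes.
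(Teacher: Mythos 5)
Your proof is correct, but it runs along a genuinely different track from the paper's. The paper computes $\Det{A_t}$ explicitly: it introduces the bordered matrix $C_t(x)$ built from $B_{t-1}$, $A_{t-1}$ and a bottom row of monomials, uses cofactor expansion to get $\Det{C_t(y_k)} = \Det{A_{t-1}}\,\hat{P}_{2(t-1)}(y_k)$, and then multilinearity in the last row of $A_t$ to obtain the recursion $\Det{A_t} = \Det{A_{t-1}} \times \sum_{k=0}^{m-1} \bigl[\hat{P}_{2(t-1)}(y_k)\bigr]^2$, closing with the same root-counting you use (an even polynomial of degree $2(t-1)$ has at most $t-1$ positive roots, fewer than the $m$ distinct nodes). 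You instead argue by contradiction: a nontrivial column dependence of $A_t$ is repackaged as an even polynomial $Q$ of degree at most $2(t-1)$ satisfying $\langle \hat{P}_{2i},Q\rangle = 0$ for $0 \leq i \leq t-1$; expanding $Q$ in the $\hat{P}_{2i}$ basis gives $\langle Q,Q\rangle = 0$, hence $Q(y_k)=0$ at all $m$ nodes, and the substitution $u=y^2$ kills $Q$ by a degree count. Both are sound. The step you flag as the ``hard part'' is actually immediate: each $\hat{P}_{2i}$ has leading coefficient $1$ in degree exactly $2i$, so $\hat{P}_0,\dots,\hat{P}_{2(t-1)}$ are related to the monomials $1,y^2,\dots,y^{2(t-1)}$ by a triangular invertible change of basis; note also that your computation of $\langle Q,Q\rangle$ uses only bilinearity and the $t$ identities from the dependence, so the pairwise orthogonality you invoke from the induction hypothesis is not actually needed. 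What each approach buys: yours is the standard, more conceptual nonsingularity-of-a-Gram-type-matrix argument and needs no determinant manipulation; the paper's yields the stronger quantitative identity $\Det{A_t} = m \prod_{s=1}^{t-1} \sum_{k}\hat{P}_{2s}(y_k)^2 > 0$, which pins down the sign and magnitude of the determinant rather than merely its non-vanishing.
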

\begin{proof}
	The base case is trivial that $A_1 = \begin{bmatrix} m \end{bmatrix}$, thus $\Det{A_1} = m \neq 0$.
	
	When $2 \leq t \leq m-1$, we have
	\begin{equation*}
		\hat{P}_{2(t-1)}(y_k) = \mathop{\begin{bmatrix}
				y_k^{2(t-p)}
			\end{bmatrix}^T}
		\limits_{\mathclap{1 \le p \le t}}
		\begin{bmatrix}
			1 \\
			D_{t-1}
		\end{bmatrix}
		\quad
		(0 \leq k \leq m-1).
	\end{equation*}
	By induction hypothesis, $D_{t-1}$ has a unique solution, also by the Cramer's rule,
	\begin{equation*}
		D_{t-1} = \begin{bmatrix}
			\dfrac
			{\Det{A_{t-1} \left[B_{t-1} \middle/ p \right]}}
			{\Det{A_{t-1}}}
		\end{bmatrix}
		_{1 \le p \le t-1},
	\end{equation*}
	where $A[B/p]$ is the matrix formed by replacing the $p$-th column of $A$ by the column vector $B$. Consider the matrix
	\begin{equation*}
		C_t(x) = \begin{bmatrix}
			\begin{array}{c|@{\hspace{1.5em}}c@{\hspace{1.5em}}}
				B_{t-1} & A_{t-1} \\[1ex]
				\hline
				\multicolumn{2}{c}{} \\[-2ex]
				\multicolumn{2}{c}{\begin{bmatrix}
						x^{2(t-p)}
					\end{bmatrix}^T
					_{1 \le p \le t}}
			\end{array}
		\end{bmatrix},
	\end{equation*}
	and the cofactor expansion of $\Det{C_t(y_k)}$ along the bottom row, we establish the following identity,
	\begin{equation*}
		\Det{C_t(y_k)} = \Det{A_{t-1}} \times \hat{P}_{2(t-1)}(y_k),
	\end{equation*}
	for $0 \leq k \leq m-1$. Furthermore, if we partition $A_t$ similarly, then we get
	\begin{equation*}
		A_t = \begin{bmatrix}
			\begin{array}{c|@{\hspace{1.5em}}c@{\hspace{1.5em}}}
				B_{t-1} & A_{t-1} \\[1ex]
				\hline
				\multicolumn{2}{c}{} \\[-2ex]
				\multicolumn{2}{c}{\begin{bmatrix}
						\sum\limits_{k=0}^{m-1}\left( \hat{P}_{2(t-1)}(y_k) \times y_k^{2(t-p)} \right)
					\end{bmatrix}^T
					_{1 \le p \le t}}
			\end{array}
		\end{bmatrix}.
	\end{equation*}
	Thus, by comparing $A_t$ with $C_t$, we have the following conclusion,
	\begin{equation*}
		\Det{A_t} = \sum_{k=0}^{m-1}\left[ \hat{P}_{2(t-1)}(y_k) \times \Det{C_t(y_k)} \right] = \Det{A_{t-1}} \times \sum_{k=0}^{m-1} \left[ \hat{P}_{2(t-1)}(y_k) \right]^2 \neq 0.
	\end{equation*}
	Notice that $\hat{P}_{2(t-1)}$ is an even function, thus it has at most $t-1$ positive roots.
	On the other hand, we have $m$ distinct positive $y_k$ values and $t \leq m-1$,
	therefore the sum of the squares above cannot be zero.
\end{proof}

\ckh{
	For those odd polynomials $\hat{P}_{2t+1}(x)$ ($0 \leq t \leq m-1$), the base case is $\hat{P}_1(x) = x$.
}
We can obtain an equation system similar to (\ref{eq:low-coef}) to solve the coefficients inductively.
We denote this equation system as
\begin{equation}
	\label{eq:low-coef-odd}
	\phat{A}_t \phat{D}_t = -\phat{B}_t,
\end{equation}
where
\begin{equation*}
	%
	\begin{aligned}
		\phat{A}_t =& \mathop{\begin{bmatrix}
				\sum\limits_{k=0}^{m-1} \left( \hat{P}_{2i+1}(y_k) y_k^{2(t-p)+1} \right)
		\end{bmatrix}}
		\limits_{0 \leq i \leq t-1,1 \leq p \leq t},\\
		\ckh{
			\phat{D}_t =& \mathop{\begin{bmatrix}
					\vphantom{\sum\limits_{k=0}^{m-1}}
					d_{(2t+1,2p+1)}
			\end{bmatrix}}
			\limits_{1 \leq p \leq t}
		},\\
		\phat{B}_t =& \mathop{\begin{bmatrix}
				\sum\limits_{k=0}^{m-1} \left( \hat{P}_{2i+1}(y_k) y_k^{2t+1} \right)
		\end{bmatrix}}
		\limits_{0 \leq i \leq t-1}.
	\end{aligned}
\end{equation*}
We can also prove that (\ref{eq:low-coef-odd}) has a unique solution to $\phat{D}_t$.
\begin{prop}
	\label{prop:det-nz-odd}
	For $1 \leq t \leq m-1$, $\Det{\phat{A}_t} \neq 0$.
\end{prop}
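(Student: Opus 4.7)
The plan is to mirror the induction in Proposition~\ref{prop:det-nz} essentially verbatim, replacing $y_k^{2(t-p)}$ by $y_k^{2(t-p)+1}$ and $\hat{P}_{2i}$ by $\hat{P}_{2i+1}$ throughout. The structural algebra (block partition, Cramer's rule, cofactor expansion, reduction to a sum of squares) goes through unchanged; only the base case and the root count at the end require separate verification for the odd setting.

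First I would dispatch the base case $t=1$. Since $\hat{P}_1(x)=x$, the single entry of $\phat{A}_1$ is $\sum_{k=0}^{m-1}\hat{P}_1(y_k)\,y_k = \sum_{k=0}^{m-1} y_k^2$, which is strictly positive because the $y_k$ are distinct positive values. Hence $\Det{\phat{A}_1}\neq 0$.

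For $2\le t\le m-1$, I would invoke the induction hypothesis $\Det{\phat{A}_{t-1}}\neq 0$, so by Cramer's rule applied to $\phat{A}_{t-1}\phat{D}_{t-1}=-\phat{B}_{t-1}$ the vector $\phat{D}_{t-1}$ is uniquely determined, and
\begin{equation*}
\hat{P}_{2(t-1)+1}(y_k) \;=\; \mathop{\begin{bmatrix} y_k^{2(t-p)+1} \end{bmatrix}^T}\limits_{1\le p\le t}\begin{bmatrix} 1 \\ \phat{D}_{t-1} \end{bmatrix}.
\end{equation*}
I would then introduce the odd analog
\begin{equation*}
\phat{C}_t(x) = \begin{bmatrix}
\begin{array}{c|@{\hspace{1.5em}}c@{\hspace{1.5em}}}
\phat{B}_{t-1} & \phat{A}_{t-1} \\[1ex]
\hline
\multicolumn{2}{c}{} \\[-2ex]
\multicolumn{2}{c}{\begin{bmatrix} x^{2(t-p)+1} \end{bmatrix}^T_{1\le p\le t}}
\end{array}
\end{bmatrix},
\end{equation*}
whose cofactor expansion along the bottom row yields $\Det{\phat{C}_t(y_k)}=\Det{\phat{A}_{t-1}}\cdot\hat{P}_{2(t-1)+1}(y_k)$. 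Partitioning $\phat{A}_t$ in the same block form (its bottom row being $\sum_{k=0}^{m-1}\hat{P}_{2(t-1)+1}(y_k)\,y_k^{2(t-p)+1}$) and using multilinearity of the determinant in that row gives the key identity
\begin{equation*}
\Det{\phat{A}_t} \;=\; \Det{\phat{A}_{t-1}}\times\sum_{k=0}^{m-1}\bigl[\hat{P}_{2(t-1)+1}(y_k)\bigr]^2.
\end{equation*}

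The only place where an odd-specific argument is needed, and hence the spot I would watch most carefully, is showing that this sum of squares is nonzero. Here $\hat{P}_{2(t-1)+1}$ is an odd polynomial of degree $2t-1$: it vanishes at $0$ and its nonzero roots come in $\pm$ pairs, so it admits at most $t-1$ positive roots. Since $t\le m-1$ gives $m\ge t+1>t-1$ distinct positive sample values $y_0,\dots,y_{m-1}$, at least two of them make $\hat{P}_{2(t-1)+1}(y_k)\neq 0$, so the sum of squares is strictly positive. Combined with the inductive nonvanishing of $\Det{\phat{A}_{t-1}}$, this closes the induction.
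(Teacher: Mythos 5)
Your proposal is correct and follows essentially the same route as the paper: the paper's proof of Proposition~\ref{prop:det-nz-odd} simply reuses the induction from Proposition~\ref{prop:det-nz} with the odd-indexed objects $\phat{A}_t$, $\phat{B}_t$, $\phat{C}_t(x)$, verifies the base case $\Det{\phat{A}_1}=\sum_{k=0}^{m-1}y_k^2\neq 0$, and notes that an odd polynomial of degree $2t-1$ still has at most $t-1$ positive roots because $0$ is always a root. Your careful handling of the root count and the bound $m\ge t+1$ matches the paper's argument exactly.
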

\begin{proof}
	This proof is almost identical to the proof of Proposition~\ref{prop:det-nz}, with a different base case $\phat{A}_1$, where
	\begin{equation*}
		\phat{A}_1 = \begin{bmatrix}
			\sum\limits_{k=0}^{m-1}\left( \hat{P}_1(y_k)y_k \right)
		\end{bmatrix}
		= \begin{bmatrix}
			\sum\limits_{k=0}^{m-1} y_k^2
		\end{bmatrix}.
	\end{equation*}
	Notice that $\hat{P}_1(x) = x$. Since all $y_k > 0$, certainly we have $\Det{\phat{A}_1} \neq 0$.
	For the induction step, we substitute in the $\phat{A}$, $\phat{D}$ and $\phat{B}$ counterparts, together with
	\begin{equation*}
		\phat{C}_t(x) = \begin{bmatrix}
			\begin{array}{c|@{\hspace{1.5em}}c@{\hspace{1.5em}}}
				\phat{B}_{t-1} & \phat{A}_{t-1} \\[1ex]
				\hline
				\multicolumn{2}{c}{} \\[-2ex]
				\multicolumn{2}{c}{\substack{
						\\
						\begin{bmatrix}
							x^{2(t-p)+1}
						\end{bmatrix}^T
						_{1 \le p \le t}}}
			\end{array}
		\end{bmatrix}.
	\end{equation*}
	Also, for the number of positive roots of an odd polynomial $\hat{P}_{2(t-1)+1}$, we still have at most $t-1$,
	because zero is a root for any odd polynomial.
\end{proof}

The proofs also give us a method to derive the polynomials $\hat{P}_{2t}(x)$ and $\hat{P}_{2t+1}(x)$ inductively.
We have
\begin{equation*}
	\hat{P}_{2t}(x) = \mathop{\begin{bmatrix}
			x^{2(t-p)}
		\end{bmatrix}^T}
	\limits_{0 \leq p \leq t}
	\begin{bmatrix}
		1 \\
		D_t
	\end{bmatrix},
\end{equation*}
\begin{equation*}
	\hat{P}_{2t+1}(x) = \mathop{\begin{bmatrix}
			x^{2(t-p)+1}
		\end{bmatrix}^T }
	\limits_{0 \leq p \leq t}
	\begin{bmatrix}
		1 \\
		\phat{D}_t
	\end{bmatrix},
\end{equation*}
for $0 \leq t \leq m-1$,
where $D_t = A_t^{-1}B_t$ and $\phat{D}_t = \phat{A}_t^{-1}\phat{B}_t$ respectively.

\subsection{Obtaining Unit Vectors}
\label{ssec:algo}

To make each row vector of (\ref{eq:sym-orthogonal}) having the unit length, we refer to the condition in (\ref{eq:condition2}),
\begin{equation*}
	2 \times \sum_{k=0}^{m-1} \left[ P_i(y_k) \right]^2 = 1.
\end{equation*}
Thus, together with the fact $c_{(i,0)}\hat{P}_i(x) = P_i(x)$, we have
\begin{equation}
	\label{eq:first-coef}
	2 \times c_{(i,0)}^2 \sum_{k=0}^{m-1} \left[ \hat{P}_i(y_k) \right]^2 = 1\\\implies
	c_{(i,0)} = \pm \left( 2 \times \sum_{k=0}^{m-1} \left[ \hat{P}_i(y_k) \right]^2 \right)^{-\frac{1}{2}}.
\end{equation}
As a result, we have derived the method to obtain a $2m \times 2m$ orthogonal matrix
based on any set of $m$ distinct positive values. Algorithm~\ref{algo} presents the overall procedure.

\begin{algorithm}
	\KwData{$m$ distinct positive values $y_0,\dots,y_{m-1}$}
	\KwResult{a $2m\times2m$ orthogonal matrix $M$}
	\Begin{
		\For{$k \gets 0$ \KwTo $m-1$}{
			$\hat{P}_{0,k} \gets 1$; $\hat{P}_{1,k} \gets y_k$\label{algo:P1-coef}
		}
		\For{$t \gets 1$ \KwTo $m-1$}{
			\For{$i \gets 0$ \KwTo $t-1$}{
				\For{$p \gets 1$ \KwTo $t$}{
					$A_{i,p-1} \gets \sum_{k=0}^{m-1} \left[ \hat{P}_{2i,k} \times y_k^{2(t-p)} \right]$; $\phat{A}_{i,p-1} \gets \sum_{k=0}^{m-1} \left[ \hat{P}_{2i+1,k} \times y_k^{2(t-p)+1} \right]$
				}
				$B_i \gets \sum_{k=0}^{m-1} \left[ \hat{P}_{2i,k} \times y_k^{2t} \right]$;     $\phat{B}_i \gets \sum_{k=0}^{m-1} \left[ \hat{P}_{2i+1,k} \times y_k^{2t+1} \right]$
			}
			$D \gets A^{-1}\left(-B\right)$; $\phat{D} \gets \phat{A}^{-1}\left(-\phat{B}\right)$ \\
			\For{$k\gets0$ \KwTo $m-1$}{
				$\hat{P}_{2t,k} \gets y_k^{2t} + \sum_{p=1}^t \left[ y_k^{2(t-p)} \times D_{p-1} \right]$; $\hat{P}_{2t+1,k} \gets y_k^{2t+1} + \sum_{p=1}^t \left[y_k^{2(t-p)+1} \times \phat{D}_{p-1} \right]$ \\
			}
		}
		\For{$t \gets 0$ \KwTo $m-1$\label{algo:unit-vector}}{
			$c \gets \pm \left( 2 \sum_{k=0}^{m-1} \hat{P}_{2t,k}^2 \right)^{-\frac{1}{2}}$; $\phat{c} \gets \pm \left( 2 \sum_{k=0}^{m-1} \hat{P}_{2t+1,k}^2 \right)^{-\frac{1}{2}}$ \\
			\For{$k\gets0$ \KwTo $m-1$}{
				$M_{2t,k} \gets c \times \hat{P}_{2t,k}$; $M_{2t+1,k} \gets -\phat{c} \times \hat{P}_{2t+1,k}$ \\
				$M_{2t,k+m} \gets c \times \hat{P}_{2t,k}$; $M_{2t+1,k+m} \gets \phat{c} \times \hat{P}_{2t+1,k}$
			}
		}
	}
	\caption{Orthogonal Matrix Generation}
	\label{algo}
\end{algorithm}

\section{Generating Sample Orthogonal Matrices}
\label{sec:practice}

In order to practice our method in real world scenarios,
we apply the procedure to the solutions found in the classical expansions and reproduce those orthogonal matrices currently in wide use, as samples.
As described in \S\ref{sec:orthogonal}, to generate an $n \times n$ orthogonal matrix, $n = 2m$ must be an even number.
This requirement is in fact less restrictive than that of most other generating methods, where $n$ must be a power of 2.
Therefore, all the sample matrices can be generated by our method without any problem in their dimensions.

The experiments are carried out as follows.
We first determine $n = 2m$ distinct values for the targeted sample matrix.
In fact, among the $2m$ values, half of them are the opposites of the other half, thus only $m$ distinct positive values are required.
As discussed in \S\ref{ssec:coef-induct}--\ref{ssec:algo}, the entire procedure can be separated into two batches, that are,
(i) the {\em even}-numbered polynomials $P_0(x),P_2(x),\dots,P_{n-2}(x)$,
and (ii) the {\em odd}-numbered polynomials $P_1(x),P_3(x),\dots,P_{n-1}(x)$,
iteratively and respectively from the base cases $P_0(x)$ and $P_1(x)$.
In particular, we choose only the arithmetic square roots in Algorithm~\ref{algo} to simplify the results.
At the end of the section, we illustrate that, by using arbitrary distinct values,
we are also able to produce new and unique orthogonal matrices, not just the special values of those discovered matrices.\footnote{%
	We have implemented the procedures to generate the sample orthogonal matrices in:
	\begin{center}
		\tt github.com/ChanKaHou/DiscreteOrthogonalMatrices
	\end{center}
}

\subsection{$8\times8$ Discrete Cosine Transform Matrix}

To generate the $n \times n$ ($n = 8$) DCT matrix, we must first confirm the $n$ distinct values.
Since the DCTs are also closely related to the Chebyshev polynomials~\cite{ahmed1974discrete},
where the coefficients of $P_1(x)$ are the roots of the $n$-th Chebyshev polynomial $P_n(\cos(x)) = \cos(nx)$,
that is,
\begin{equation}
	\label{eq:Chebyshev}
	P_n(x) = \cos(n \arccos(x)) = 0.
\end{equation}
Solving (\ref{eq:Chebyshev}), we have the $n$ roots to be
$$
x_i = \cos\left(\frac{i+\frac{1}{2}}{n}\pi\right)\qquad i = 0,1,\dots,n-1.
$$
Also by Algorithm~\ref{algo}, line~\ref{algo:P1-coef}, we notice that
the coefficients of $P_1(x)$ are also the set of $n$ distinct values we are using to generate the matrix.
Thus for the $8 \times 8$ orthogonal matrix,
we have the $8$ values as the roots of (\ref{eq:Chebyshev}),
$\pm\cos\left(\frac{\pi}{16}\right),\pm\cos\left(\frac{3\pi}{16}\right),\pm\cos\left(\frac{5\pi}{16}\right),\pm\cos\left(\frac{7\pi}{16}\right)$.
By taking these values, Algorithm~\ref{algo} produces an $8 \times 8$ DCT matrix as shown in Appendix~\ref{appe:DCT8x8}.
In fact, the Appendix~\ref{appe:DCT8x8} can be multiplied by $64\sqrt{8}$ and rounded to Appendix~\ref{appe:integerDCT8x8},
which is the widely used DCT type 2 matrix \cite{ShaoJ08a} that has been embedded in the next generation video coding standard, Versatile Video Coding~\cite{bross2020versatile}, and its reference software VVC Test Model~(VTM).
Figure~\ref{fig:DCT} gives the corresponding plot of the first $8$ Chebyshev polynomials.
The plot of the polynomials shows some features of cosine functions.
The roots are also in the range of $[-1,+1]$, and all of them are the cosine values of the radians in an arithmetic sequence.

\begin{figure} [h]
	\centering
	\begin{tikzpicture}
		\begin{axis}[
			smooth,
			width=\linewidth,
			unit vector ratio = 1 1,
			grid=major, grid style=dashed,
			domain=-1.1:1.1, ymin=-0.6, ymax=+0.6,
			axis lines=middle, axis line style = thick,
			xtick={-cos(180/16),-cos(180*3/16),-cos(180*5/16),-cos(180*7/16),+cos(180*7/16),+cos(180*5/16),+cos(180*3/16),+cos(180/16)},
			xticklabels={$-\cos\left(\frac{\pi}{16}\right)$,$-\cos\left(\frac{3\pi}{16}\right)$,$-\cos\left(\frac{5\pi}{16}\right)$,$-\cos\left(\frac{7\pi}{16}\right)$,$\cos\left(\frac{7\pi}{16}\right)$,$\cos\left(\frac{5\pi}{16}\right)$, ,
				$\cos\left(\frac{\pi}{16}\right)$},
			x tick label style={rotate=60,anchor=east,overlay},
			ytick distance=0.1,
			legend pos=south east, legend cell align=left
			]
			\addlegendentry{$P_0\left(x\right)$} \addplot[black,thick]{1/(8^0.5)};
			\addlegendentry{$P_1\left(x\right)$} \addplot[red,thick]{0.5*(x)};
			\addlegendentry{$P_2\left(x\right)$} \addplot[green,thick]{(x^2-0.5)};
			\addlegendentry{$P_3\left(x\right)$} \addplot[blue,thick]{2*(x^3-0.75*x)};
			\addlegendentry{$P_4\left(x\right)$} \addplot[olive,thick]{4*(x^4-x^2+0.125)};
			\addlegendentry{$P_5\left(x\right)$} \addplot[magenta,thick]{8*(x^5-1.25*x^3+0.3125*x)};
			\addlegendentry{$P_6\left(x\right)$} \addplot[orange,thick]{16*(x^6-1.5*x^4+0.5625*x^2-0.03125)};
			\addlegendentry{$P_7\left(x\right)$} \addplot[cyan,thick]{32*(x^7-1.75*x^5+0.875*x^3-0.109375*x)};
		\end{axis}
	\end{tikzpicture}
	\caption{The first 8 polynomials of the DCT matrix in domain $x\in\left(-1,+1\right)$, the corresponding 8 roots as $\left\lbrace\pm\cos\left(\frac{\pi}{16}\right),\pm\cos\left(\frac{3\pi}{16}\right),\pm\cos\left(\frac{5\pi}{16}\right),\pm\cos\left(\frac{7\pi}{16}\right)\right\rbrace$.}
	\label{fig:DCT}
\end{figure}
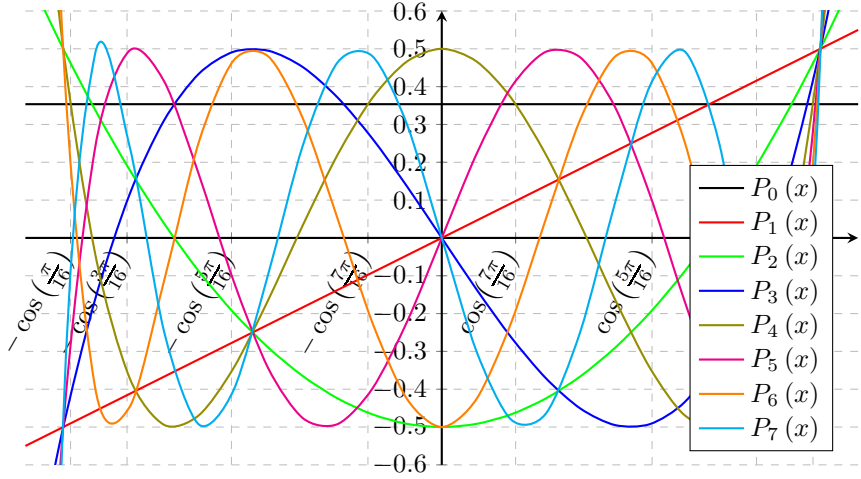

\subsection{$8\times8$ Discrete Tchebichef Transform Matrix}

The Discrete Tchebichef Transform~(DTT) is another widely used transform method by using the Chebyshev polynomials~\cite{mukundan2001image},
which has as good energy compaction properties as of the DCT,
and works better for a certain class of 2D information.
Because the Chebyshev polynomials are too complex, unlike in the DCT case,
the roots of the $n$-th polynomial $P_n(x) = 0$ are difficult to obtain for setting the values to generate the matrix.
However, as discussed in the DCT case, the discovered orthogonal matrices can help us determine the coefficients of polynomial $P_1(x)$,
thus the values for our generation method.
For example, a $4 \times 4$ DTT matrix has been discussed in~\cite{nakagaki2007fast} where the coefficients of $P_1(x)$ are
$$
-\frac{3\sqrt{5}}{10},-\frac{\sqrt{5}}{10},+\frac{\sqrt{5}}{10},+\frac{3\sqrt{5}}{10},
$$
which form an arithmetic sequence. We can use these values to generate the orthogonal matrix.
Furthermore, consider the loop on line~\ref{algo:unit-vector} of Algorithm~\ref{algo} to normalize each row to a unit vector,
the values for generating the matrix can be scaled arbitrarily.
Therefore, we can use a better distributed arithmetic sequence
$$
-\frac{3}{4}, -\frac{1}{4}, +\frac{1}{4}, +\frac{3}{4}
$$
in the range of $[-1, 1]$ as the generating values.
As a result, we obtain the same matrix as in Appendix~\ref{appe:DTT4x4} by using Algorithm~\ref{algo} with these values.
Similarly, in order to generate an $8 \times 8$ DTT matrix, we must determine the 8 generating values first.
Following the same principle, we use the evenly distributed arithmetic sequence of 8 values in the range of $[-1, 1]$,
$$
\pm\frac{1}{8},\pm\frac{3}{8},\pm\frac{5}{8},\pm\frac{7}{8}.
$$
We are able to obtain the $8 \times 8$ DTT matrix as in Appendix~\ref{appe:DTT8x8},
which is identical to the one describe in~\cite{mukundan2001image}.
Similar to Appendix~\ref{appe:integerDCT8x8},
Appendix~\ref{appe:DTT4x4} and \ref{appe:DTT8x8} can also be scaled and rounded to an integer matrix like Appendix~\ref{appe:integerDTT4x4} and \ref{appe:integerDTT8x8},
which can be used in coding applications.
Figure~\ref{fig:DTT} shows the corresponding plot of the first 8 DTT polynomials.
Different from the DCT case, the values to generate a DTT matrix themselves form an arithmetic sequence.
Thus, for the generation of a general $2m \times 2m$ DTT matrix
we should set the arithmetic sequence
$$
\pm\frac{1}{2m}, \pm\frac{3}{2m}, \dots, \pm\frac{2m-1}{2m}
$$
as the generating values in our method. This enables us to generate DTT matrices of arbitrarily large sizes.

\begin{figure} [h]
	\centering
	\begin{tikzpicture}
		\begin{axis}[
			smooth,
			width=\linewidth,
			unit vector ratio = 0.8 0.8,
			grid=major, grid style=dashed,
			domain=-1.1:1.1, ymin=-1.0, ymax=+1.0,
			axis lines=middle, axis line style = thick,
			xtick={-7/8,-5/8,-3/8,-1/8,+1/8,+3/8,+5/8,+7/8},
			xticklabels={$-\frac{7}{8}$,$-\frac{5}{8}$,$-\frac{3}{8}$,$-\frac{1}{8}$,
				$+\frac{1}{8}$,$+\frac{3}{8}$,$+\frac{5}{8}$,$+\frac{7}{8}$},
			ytick distance=0.2,
			legend pos=south east, legend cell align=left
			]
			\addlegendentry{$P_0\left(x\right)$} \addplot[black,thick]{1/(8^0.5)};
			\addlegendentry{$P_1\left(x\right)$} \addplot[red,thick]{0.6172134*(x)};
			\addlegendentry{$P_2\left(x\right)$} \addplot[green,thick]{1.2344268*(x^2-0.3281250)};
			\addlegendentry{$P_3\left(x\right)$} \addplot[blue,thick]{2.6259518*(x^3-0.5781250*x)};
			\addlegendentry{$P_4\left(x\right)$} \addplot[olive,thick]{6.0168115*(x^4-0.7991071*x^2+0.0725098)};
			\addlegendentry{$P_5\left(x\right)$} \addplot[magenta,thick]{15.3381041*(x^5-0.9895833*x^3+0.1826288*x)};
			\addlegendentry{$P_6\left(x\right)$} \addplot[orange,thick]{46.2167518*(x^6-1.1434659*x^4+0.3055975*x^2-0.0111580)};
			\addlegendentry{$P_7\left(x\right)$} \addplot[cyan,thick]{190.4421354*(x^7-1.2536058*x^5+0.4145900*x^3-0.0312727*x)};
		\end{axis}
	\end{tikzpicture}
	\caption{The first 8 polynomials of the DTT matrix in domain $x\in\left(-1,+1\right)$, the corresponding 8 roots as $\left\lbrace\pm\frac{1}{8},\pm\frac{3}{8},\pm\frac{5}{8},\pm\frac{7}{8}\right\rbrace$.}
	\label{fig:DTT}
\end{figure}
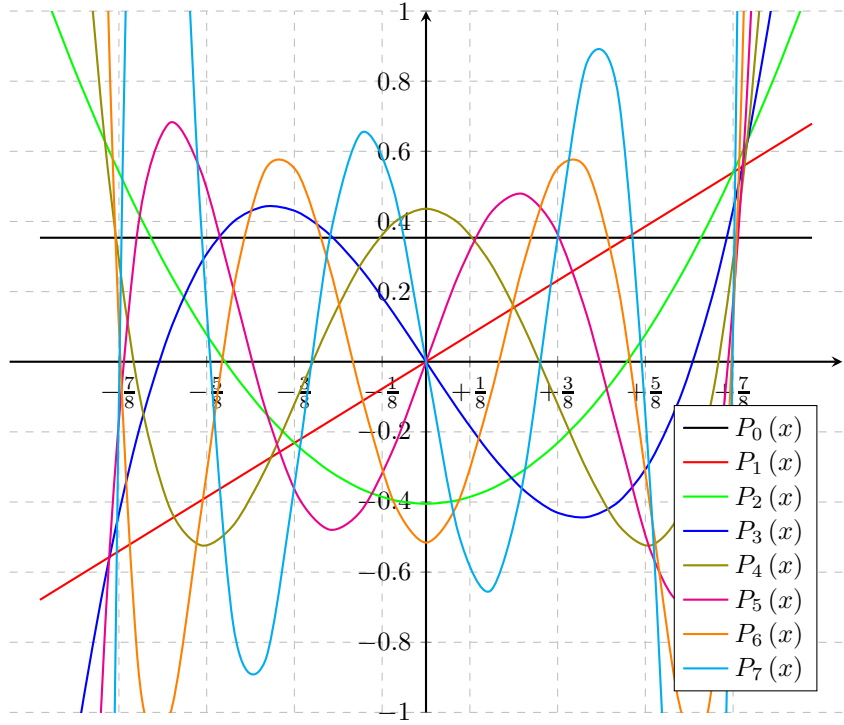

\subsection{Further Discussion}

It may not always be possible to come up with natural discretizations as in these examples.
Switching to our method, we need only to determine the generating values, then the corresponding matrix can be produced accordingly.
As indicated in the DCT and DTT cases, their generating values have certain patterns,
of which we can make use to produce larger orthogonal matrices of the same class.
In fact, our method has the advantage to accept any real numbers as the generating values,
the DCT and DTT are only two well-known cases serving as the evidence of success in our practice.
There are other potential sequences such as triangular numbers $\pm1,\pm3,\pm6,\pm10,\dots$,
prime numbers $\pm2,\pm3,\pm5,\pm7,\dots$ and Fibonacci numbers $\pm1,\pm2,\pm3,\pm5,\dots$
that can be examined further for application.
The respective $8 \times 8$ orthogonal matrices from these sequences are shown in Appendix~\ref{appe:DTriT8x8}, \ref{appe:DPrimeT8x8} and \ref{appe:DFibonacciT8x8}.
In the same way like Appendix~\ref{appe:integerDCT8x8}, all of them can be multiplied by $64\sqrt{8}$ and rounded to integer matrices for applicable applications.
It is worth noting that for polynomials of high degrees, some definitions near $[-1,1]$
classify a large percentage of the determined values as potential polynomial roots.
We can scale the polynomials for optimal root arrangement.
However, significantly decreasing the scaling factor will increase the energy of polynomials, always resulting in very large $c_{(k,0)}$.
Although this in turn affects the weight function when dealt with in the continuous form, it is out of the scope of this work.

On the other hand, there is a weak point of our method that we only present these polynomials in the form of approximate coefficients.
Although all the polynomials for the orthogonal matrices can be formulated in the accurate form like those 0 and 1-degree polynomials,
it will be too complex to read and implement for the higher degree polynomials.
When we have to approximate the coefficients iteratively, rounding errors must be taken into consideration.
Further, our method may not support the class of discrete polynomials that are orthogonal on non-uniform lattice, such as the rotation matrix in a 3D transformer.
Because a rotation matrix is not limited to $2m\times2m$ in size and its determinant must meet an additional condition, i.e., equal to $\pm1$.
In real world applications, the DCT and DTT have been widely use in image compression.
Also, our method has the potential to apply to cryptography.
For example, the user can arbitrary determine a set of values as the encryption key,
and go through our algorithm to generate a unique orthogonal matrix to map the plain text into the cipher text.
Using the inverse matrix will be able to decrypt the cipher text back to the original.

\section{Conclusion}
\label{sec:conclude}

In this paper, we present a general method for generating discrete orthogonal matrices of arbitrary even numbered sizes,
from user determined sets of distinct positive real numbers.
We give the complete induction procedure which also leads to the formal justification and the algorithm.
Our method is able to generate a class of discrete polynomials that are orthogonal on uniform lattices.
We have reproduced the well-known DCT and DTT matrices in terms of the corresponding positive values without using the continuous polynomials.
Our method provides a shortcut to the development of undiscovered orthogonal transforms for potential applications.
Invertible transformers can be generated more efficiently that are effective for sample data testing and evaluation of new ideas.
The application of this method can help eliminating the need of heavy mathematics for using certain class of orthogonal matrices.
The results of our practices shows the power and flexibility of this generating method compared with other methods for discrete orthogonal transformers.
In addition, we show that the generated matrices have the potential to facilitate other applications and analysis.

\bibliographystyle{amsrefs}
\bibliography{bib/bibliography}

\newpage
\appendices

\begin{appendices}
\def\ApdxSize{\footnotesize}

\section{DCT Matrix}

\subsection{$8 \times 8$ DCT Matrix}\label{appe:DCT8x8}
{\ApdxSize
\begin{equation*}
\begin{bmatrix*}[r]
0.3535534 & 0.3535534 & 0.3535534 & 0.3535534 & 0.3535534 & 0.3535534 & 0.3535534 & 0.3535534\\
-0.4903926 & -0.4157348 & -0.2777851 & -0.0975452 & 0.0975452 & 0.2777851 & 0.4157348 & 0.4903926\\
0.4619398 & 0.1913417 & -0.1913417 & -0.4619398 & -0.4619398 & -0.1913417 & 0.1913417 & 0.4619398\\
-0.4157348 & 0.0975452 & 0.4903926 & 0.2777851 & -0.2777851 & -0.4903926 & -0.0975452 & 0.4157348\\
0.3535534 & -0.3535534 & -0.3535534 & 0.3535534 & 0.3535534 & -0.3535534 & -0.3535534 & 0.3535534\\
-0.2777851 & 0.4903926 & -0.0975452 & -0.4157348 & 0.4157348 & 0.0975452 & -0.4903926 & 0.2777851\\
0.1913417 & -0.4619398 & 0.4619398 & -0.1913417 & -0.1913417 & 0.4619398 & -0.4619398 & 0.1913417\\
-0.0975452 & 0.2777851 & -0.4157348 & 0.4903926 & -0.4903926 & 0.4157348 & -0.2777851 & 0.0975452\\
\end{bmatrix*}
\end{equation*}
}

\subsection{$8 \times 8$ DCT Matrix in integer, namely DEFINE\_DCT2\_P8\_MATRIX}\label{appe:integerDCT8x8}

{\ApdxSize
\begin{equation*}
\text{Appendix}~\ref{appe:DCT8x8} \times64\sqrt{8}=
\begin{bmatrix*}[r]
 64 & 64 & 64 & 64 & 64 & 64 & 64 & 64\\
-89 & -75 & -50 & -18 & 18 & 50 & 75 & 89\\
 84 & 35 & -35 & -84 & -84 & -35 & 35 & 84\\
-75 & 18 & 89 & 50 & -50 & -89 & -18 & 75\\
 64 & -64 & -64 & 64 & 64 & -64 & -64 & 64\\
-50 & 89 & -18 & -75 & 75 & 18 & -89 & 50\\
 35 & -84 & 84 & -35 & -35 & 84 & -84 & 35\\
-18 & 50 & -75 & 89 & -89 & 75 & -50 & 18\\
\end{bmatrix*}
\end{equation*}
}

\section{DTT Matrix}

\subsection{$4 \times 4$ DTT Matrix}\label{appe:DTT4x4}

{\ApdxSize
\begin{equation*}
\begin{bmatrix*}[r]
0.5000000 & 0.5000000 & 0.5000000 & 0.5000000\\
-0.6708204 & -0.2236068 & 0.2236068 & 0.6708204\\
0.5000000 & -0.5000000 & -0.5000000 & 0.5000000\\
-0.2236068 & 0.6708204 & -0.6708204 & 0.2236068\\
\end{bmatrix*}
\end{equation*}
}

\subsection{$4 \times 4$ DTT Matrix in integer}\label{appe:integerDTT4x4}

{\ApdxSize
\begin{equation*}
\text{Appendix}~\ref{appe:DTT4x4} \times128=
\begin{bmatrix*}[r]
64 & 64 & 64 & 64\\
-86 & -29 & 29 & 86\\
64 & -64 & -64 & 64\\
-29 & 86 & -86 & 29\\
\end{bmatrix*}
\end{equation*}
}

\subsection{$8 \times 8$ DTT Matrix}\label{appe:DTT8x8}

{\ApdxSize
\begin{equation*}
\begin{bmatrix*}[r]
0.3535534 & 0.3535534 & 0.3535534 & 0.3535534 & 0.3535534 & 0.3535534 & 0.3535534 & 0.3535534\\
-0.5400617 & -0.3857584 & -0.2314550 & -0.0771517 & 0.0771517 & 0.2314550 & 0.3857584 & 0.5400617\\
0.5400617 & 0.0771517 & -0.2314550 & -0.3857584 & -0.3857584 & -0.2314550 & 0.0771517 & 0.5400617\\
-0.4308202 & 0.3077287 & 0.4308202 & 0.1846372 & -0.1846372 & -0.4308202 & -0.3077287 & 0.4308202\\
0.2820380 & -0.5237849 & -0.1208734 & 0.3626203 & 0.3626203 & -0.1208734 & -0.5237849 & 0.2820380\\
-0.1497862 & 0.4921546 & -0.3637664 & -0.3209704 & 0.3209704 & 0.3637664 & -0.4921546 & 0.1497862\\
0.0615457 & -0.3077287 & 0.5539117 & -0.3077287 & -0.3077287 & 0.5539117 & -0.3077287 & 0.0615457\\
-0.0170697 & 0.1194880 & -0.3584641 & 0.5974401 & -0.5974401 & 0.3584641 & -0.1194880 & 0.0170697\\
\end{bmatrix*}
\end{equation*}
}

\subsection{$8 \times 8$ DTT Matrix in integer}\label{appe:integerDTT8x8}

{\ApdxSize
\begin{equation*}
\text{Appendix}~\ref{appe:DTT8x8} \times64\sqrt{8}=
\begin{bmatrix*}[r]
64 & 64 & 64 & 64 & 64 & 64 & 64 & 64\\
-98 & -70 & -42 & -14 & 14 & 42 & 70 & 98\\
98 & 14 & -42 & -70 & -70 & -42 & 14 & 98\\
-78 & 56 & 78 & 33 & -33 & -78 & -56 & 78\\
51 & -95 & -22 & 66 & 66 & -22 & -95 & 51\\
-27 & 89 & -66 & -58 & 58 & 66 & -89 & 27\\
11 & -56 & 100 & -56 & -56 & 100 & -56 & 11\\
-3 & 22 & -65 & 108 & -108 & 65 & -22 & 3\\
\end{bmatrix*}
\end{equation*}
}

\section{$8 \times 8$ Discrete Triangular Matrix}\label{appe:DTriT8x8}

{\ApdxSize
\begin{equation*}
\begin{bmatrix*}[r]
0.3535534 & 0.3535534 & 0.3535534 & 0.3535534 & 0.3535534 & 0.3535534 & 0.3535534 & 0.3535534\\
-0.5852057 & -0.3511234 & -0.1755617 & -0.0585206 & 0.0585206 & 0.1755617 & 0.3511234 & 0.5852057\\
0.5773204 & -0.0045458 & -0.2500207 & -0.3227539 & -0.3227539 & -0.2500207 & -0.0045458 & 0.5773204\\
-0.3892916 & 0.4438069 & 0.3647887 & 0.1357083 & -0.1357083 & -0.3647887 & -0.4438069 & 0.3892916\\
0.2033226 & -0.5849516 & 0.0430456 & 0.3385833 & 0.3385833 & 0.0430456 & -0.5849516 & 0.2033226\\
-0.0773104 & 0.4211527 & -0.4960307 & -0.2657199 & 0.2657199 & 0.4960307 & -0.4211527 & 0.0773104\\
0.0190005 & -0.1811381 & 0.5573480 & -0.3952104 & -0.3952104 & 0.5573480 & -0.1811381 & 0.0190005\\
-0.0030692 & 0.0487665 & -0.3001014 & 0.6383976 & -0.6383976 & 0.3001014 & -0.0487665 & 0.0030692\\
\end{bmatrix*}
\end{equation*}
}

\section{$8 \times 8$ Discrete Prime Matrix}\label{appe:DPrimeT8x8}

{\ApdxSize
\begin{equation*}
\begin{bmatrix*}[r]
0.3535534 & 0.3535534 & 0.3535534 & 0.3535534 & 0.3535534 & 0.3535534 & 0.3535534 & 0.3535534\\
-0.5306686 & -0.3790490 & -0.2274294 & -0.1516196 & 0.1516196 & 0.2274294 & 0.3790490 & 0.5306686\\
0.5492456 & 0.0655064 & -0.2569865 & -0.3577655 & -0.3577655 & -0.2569865 & 0.0655064 & 0.5492456\\
-0.4376551 & 0.2599606 & 0.3850049 & 0.3043842 & -0.3043842 & -0.3850049 & -0.2599606 & 0.4376551\\
0.2676517 & -0.5675875 & -0.0249981 & 0.3249339 & 0.3249339 & -0.0249981 & -0.5675875 & 0.2676517\\
-0.1631185 & 0.5245744 & -0.2465314 & -0.3707240 & 0.3707240 & 0.2465314 & -0.5245744 & 0.1631185\\
0.0411317 & -0.2203482 & 0.5552774 & -0.3760609 & -0.3760609 & 0.5552774 & -0.2203482 & 0.0411317\\
-0.0155286 & 0.1164647 & -0.4891517 & 0.4969160 & -0.4969160 & 0.4891517 & -0.1164647 & 0.0155286\\
\end{bmatrix*}
\end{equation*}
}

\section{$8 \times 8$ Discrete Fibonacci Matrix}\label{appe:DFibonacciT8x8}

{\ApdxSize
\begin{equation*}
\begin{bmatrix*}[r]
0.3535534 & 0.3535534 & 0.3535534 & 0.3535534 & 0.3535534 & 0.3535534 & 0.3535534 & 0.3535534\\
-0.5661385 & -0.3396831 & -0.2264554 & -0.1132277 & 0.1132277 & 0.2264554 & 0.3396831 & 0.5661385\\
0.5824600 & -0.0286456 & -0.2196161 & -0.3341984 & -0.3341984 & -0.2196161 & -0.0286456 & 0.5824600\\
-0.4160039 & 0.3684606 & 0.3744035 & 0.2258307 & -0.2258307 & -0.3744035 & -0.3684606 & 0.4160039\\
0.1877387 & -0.5437577 & -0.0518894 & 0.4079084 & 0.4079084 & -0.0518894 & -0.5437577 & 0.1877387\\
-0.0798432 & 0.4748570 & -0.3025637 & -0.4202274 & 0.4202274 & 0.3025637 & -0.4748570 & 0.0798432\\
0.0222374 & -0.2801910 & 0.5692770 & -0.3113233 & -0.3113233 & 0.5692770 & -0.2801910 & 0.0222374\\
-0.0072786 & 0.1528496 & -0.4658274 & 0.5094987 & -0.5094987 & 0.4658274 & -0.1528496 & 0.0072786\\
\end{bmatrix*}
\end{equation*}
}

\end{appendices}

\end{document}